\newcommand{\Xcal}{\mathcal{X}}
\newcommand{\Mcal}{\mathcal{M}}
\newcommand{\Gcal}{\mathcal{G}}
\newcommand{\Phat}{\hat{P}}
\newcommand{\Xhat}{\hat{X}}
\newcommand{\xhat}{\hat{x}}
\newcommand{\encF}{\mathrm{f}}
\newcommand{\decC}{\mathrm{c}}
\newcommand{\funID}{\mathrm{id}}
\newcommand{\classL}{\mathrm{L}}
\DeclareMathOperator*{\argmax}{arg\,max}
\DeclareMathOperator*{\argmin}{arg\,min}
\newcommand{\myBlue}{blue!80!black}
\newcommand{\myGreen}{green!60!black}
\newcommand{\myRed}{red!80!black}
\definecolor{NYUviolet}{HTML}{57068c} 	
\definecolor{NYUlight}{HTML}{8900e1} 	
\definecolor{NYUdark}{HTML}{330662} 	
\definecolor{NYUnight}{HTML}{220337} 	
\tikzstyle{block}=[rectangle,draw,very thick,fill=white,align=center]
\tikzstyle{edge} = [draw,very thick,->,-triangle 45]
\tikzstyle{plateBlue} = [draw=\myBlue, shape=rectangle, rounded corners=0.5ex, ultra thick,
\tikzstyle{plateRed} = [draw=\myRed, shape=rectangle, rounded corners=0.5ex, ultra thick,
\tikzstyle{plateGreen} = [draw=\myGreen, shape=rectangle, rounded corners=0.5ex, ultra thick,
\tikzstyle{plate} = [draw, shape=rectangle, rounded corners=0.5ex, ultra thick,
\tikzstyle{plateSmall} = [draw, shape=rectangle, rounded corners=0.5ex, ultra thick,
\tikzset{%
	half clip/.code={
		\clip (0, -256) rectangle (256, 256);
	},
	color/.code=\colorlet{fill color}{#1},
	color alias/.code args={#1 as #2}{\colorlet{#1}{#2}},
	colors alias/.style={color alias/.list/.expanded={#1}},
	execute/.code={#1},
	on left/.style={.. on left/.style={#1}},
	on right/.style={.. on right/.style={#1}},
	split/.style args={#1 and #2}{
		on left ={color alias=fill color as #1},
		on right={color alias=fill color as #2, half clip}
	}
}
\newcommand\reflect[2][]{%
	\begin{scope}[#1]\foreach \side in {-1, 1}{\begin{scope}
				\ifnum\side=-1 \tikzset{.. on left/.try}\else\tikzset{.. on right/.try}\fi
				\begin{scope}[xscale=\side]#2\end{scope}
\end{scope}}\end{scope}}
\tikzset{%
	cat/.pic={
		\tikzset{x=1.5cm/5,y=1.5cm/5,shift={(0,-1/3)}}
		\useasboundingbox (-1,-1) (1,2);
		\fill [BlueGrey900] (0,-2)
		.. controls ++(180:3) and ++(0:5/4) .. (-2,0)
		arc (270:90:1/5)
		.. controls ++(0:2) and ++(180:11/4) .. (0,-2+2/5);
		\foreach \i in {-1,1}
		\scoped[shift={(1/2*\i,9/4)}, rotate=45*\i]{
			\clip [overlay] (0, 5/9) ellipse [radius=8/9];
			\clip [overlay] (0,-5/9) ellipse [radius=8/9];
			\fill [BlueGrey900] ellipse [radius=1];
			\clip [overlay] (0, 7/9) ellipse [radius=10/11];
			\clip [overlay] (0,-7/9) ellipse [radius=10/11];
			\fill [Purple100] ellipse [radius=1];
		};
		\fill [BlueGrey900] ellipse [x radius=3/4, y radius=2];
		\fill [BlueGrey100] ellipse [x radius=1/3, y radius=1];
		\fill [BlueGrey900]
		(0,15/8) ellipse [x radius=1, y radius=5/6]
		(0, 8/6) ellipse [x radius=1/2, y radius=1/2]
		{[shift={(-1/2,-2)}, rotate= 10]  ellipse [x radius=1/3, y radius=5/4]}
		{[shift={( 1/2,-2)}, rotate=-10] ellipse [x radius=1/3, y radius=5/4]};
		\fill [BlueGrey500]
		(-1/9,11/8) ellipse [x radius=1/5, y radius=1/5]
		( 1/9,11/8) ellipse [x radius=1/5, y radius=1/5];
		\fill [Purple100]
		(0,12/8)     ellipse [x radius=1/10, y radius=1/5]
		(0,12/8+1/9) ellipse [x radius=1/5 , y radius=1/10];
		\foreach \i in {-1,1}
		\scoped[shift={(1/2*\i,2)}, rotate=35*\i]{
			\clip [overlay] (0, 1/7) ellipse [radius=2/7];
			\clip [overlay] (0,-1/7) ellipse [radius=2/7];
			\fill [Yellow50] ellipse [radius=1];
		};
		\scoped{
			\clip (-1,-2) rectangle ++(2,1);
			\fill [BlueGrey900] (0,-2) ellipse [radius=1/2];
			\fill [Grey100]
			(-1/2,-2) ellipse [x radius=1/3, y radius=1/4]
			( 1/2,-2) ellipse [x radius=1/3, y radius=1/4];
		};
		\foreach \i in {-1,1}
		\foreach \j in {-1,0,1}
		\fill [Grey100, shift={(0,11/8)}, xscale=\i, rotate=\j*15,
		shift=(0:1/2)]
		ellipse [x radius=1/3, y radius=1/64];
	},
dog/.pic={
	\begin{scope}[x=1.5cm/480,y=1.5cm/480]
		\useasboundingbox (-256, -256) (256, 256);
		\reflect[split=Brown400 and Brown500]{
			\fill [fill color] (0,-64) ellipse [x radius=160, y radius=144];
			\fill [fill color] (0, 32) ellipse [x radius=128, y radius=112];
			\fill [fill color] (32,96)
			.. controls ++( 75:128) and ++(105:128) .. ++(192,  0)
			.. controls ++(285: 96) and ++(285: 96) .. ++(-80,-32)
			.. controls ++(105: 64) and ++( 75: 32) .. cycle;
		}
		\reflect[split={Grey100 and Grey200}]{
			\clip (0,-64) ellipse [x radius=160, y radius=144];
			\fill [fill color](0,-224) 
			.. controls ++(  0:64) and ++(270:64) .. ++(96,64)
			.. controls ++( 90:64) and ++(270:64) .. ++(-96,192)
			.. controls ++(270:64) and ++( 90:64) .. ++(-96,-192)
			.. controls ++(270:64) and ++(180:64) .. cycle;
		}
		\reflect[split={Pink100 and Pink200}]{
			\fill [fill color](0,-192) ellipse [x radius=28, y radius=32];
		}
		\reflect[split={BlueGrey800 and BlueGrey900}]{
			\fill [fill color](0,-144) 
			.. controls ++(  0:20) and ++(315:20) .. ++( 40,64)
			.. controls ++(135:20) and ++( 45:20) .. ++(-80, 0)
			.. controls ++(225:20) and ++(180:20) .. cycle;
			\fill [BlueGrey900] (56, 0) circle [radius=20];
			\fill [fill color] (-8,-112)
			-- ++(16,0) -- ++(0,-32) arc (180:360:24)
			arc (180:0:8) arc (360:180:40);
		}
\end{scope}}
}
\tikzset{
	o/.style={
		shorten >=#1,
		decoration={
			markings,
			mark={
				at position 1
				with {
					\draw circle [radius=#1];
				}
			}
		},
		postaction=decorate
	},
	o/.default=2pt
}
\def\BibTeX{{\rm B\kern-.05em{\sc i\kern-.025em b}\kern-.08em
    T\kern-.1667em\lower.7ex\hbox{E}\kern-.125emX}}
\newtheorem{lemma}{Lemma}
\newtheorem{proposition}{Proposition}
\newenvironment{proofsketch}{%
  \proof}{\endproof}
\begin{document}

\title{Single-Shot Compression for Hypothesis Testing
	\thanks{This work was supported in part by NSF--Intel grant \#2003182 and NSF grant \#1925079.}
}

\author{
    \IEEEauthorblockN{Fabrizio Carpi, Siddharth Garg, Elza Erkip}
    \IEEEauthorblockA{Department of Electrical and Computer Engineering, New York University, Brooklyn, NY}
    \{fabrizio.carpi, siddharth.garg, elza\}@nyu.edu
}

\maketitle

\begin{abstract}

Enhanced processing power in the cloud allows constrained devices to offload costly computations: for instance, complex data analytics tasks can be computed by remote servers.
Remote execution calls for a new compression paradigm that optimizes performance on the analytics task within a rate constraint, instead of the traditional rate-distortion framework which focuses on source reconstruction.
This paper considers a simple binary hypothesis testing scenario where the resource constrained client (transmitter) performs fixed-length single-shot compression on data sampled from one of two distributions; the server (receiver) performs a hypothesis test on multiple received samples to determine the correct source distribution. 
To this end, the task-aware compression problem is formulated as finding the optimal source coder that maximizes the asymptotic error performance of the hypothesis test on the server side under a rate constraint. 
A new source coding strategy based on a greedy optimization procedure is proposed and it is shown that that the proposed compression scheme outperforms universal fixed-length single-shot coding scheme for a range of rate constraints. 

\end{abstract}

\begin{IEEEkeywords}
Task-aware compression, source coding, fixed-length, single-shot, hypothesis testing.
\end{IEEEkeywords}

\section{Introduction}

Access to higher bandwidth and lower latency wireless technology is accelerating the use of edge computing. 
In edge computing, a resource constrained \emph{client}, a mobile phone or a sensor for example, outsources computations to a remote \emph{server} over a wireless link. 
Typically, the computations involve decision and analytics tasks over the transmitted data: for instance, image classification, object detection or speech recognition.
For efficient bandwidth usage, the client might seek to compress the source data before transmitting to the server. 
However, traditional compression (or source coding) schemes are optimized for source reconstruction, that is, the seek to minimize a distortion metric (e.g., mean squared error) between the transmitted and the received data. 
Nonetheless, distortion does not directly correspond to the receiver's goal in the edge computing scenario. 
In this case, the receiver's goal is to maximize performance on the analytics tasks of interest. 
This gives rise to the central question of this paper: how can we design \emph{task-aware} source coding schemes which provide \emph{effective} representations of the source data so as to successfully carry out the analytics task?


One answer to this question is to use a distortion metric that is tailored for common analytics tasks. 
Motivated by this idea, recent works~\cite{Shkel-18-TIT, Shkel-17-ISIT} have studied the rate-distortion tradeoffs for the logarithmic loss distortion measure, since log-loss is commonly used in the machine learning community in the context of classification tasks. 
However, even log-loss distortion measure is ultimately a proxy for the analytics task at hand. 
How much better could one do by tailoring the compression scheme for the exact analytics task?

In this paper, we investigate task-aware compression for a simple edge computing scenario.
We select binary hypothesis testing as a candidate task since it is both commonplace and well understood mathematically. 
In binary hypothesis testing the source data is sampled from one of two distributions and the goal is to decide which one was the correct source distribution.

Next, we model the client's resource constraints --- an unconstrained client could perform the hypothesis test by itself and transmit a single bit (binary decision) to the server.
In contrast, our primary assumption is that the client does not have processing capabilities to compute the task locally.
We model a resource-constrained client that only has sufficient resources to store and process a \emph{single} data sample at a time; as such, it compresses each data sample it receives using a simple \emph{scalar} compression scheme (as opposed to vector compression) and transmits to the server, over a rate-limited link.
In literature, this is referred to as ``single-shot'' compression. 
We assume fixed-length (lossy) compression, i.e., the compressed samples belong to an alphabet with size limited by the rate constraint. 
The server, on the other hand, is computationally unconstrained and collects an arbitrarily large number of compressed samples from the client for hypothesis testing.

Versions of this problem have been investigated in a multi-terminal setting with compression over large blocklengths~\cite{Han-Amari-98}. 
In most of this literature, no resource constraints are assumed on the clients and the asymptotic performance is provided.
Ziv~\cite{Ziv-88} analyzes 
binary hypothesis testing with empirically observed statistics; a link to universal compression is established but applies only over large blocklengths, while we are interested in single-shot compression.
Prior work has also looked at the related problem of learning classification-oriented compressed data 
representations~\cite{Li-20}, where both the client and server operate on a single sample of data as it is customary in classification settings, as opposed to hypothesis testing that operates over large blocklengths~\cite{HypTest-Classif}. 

The main focus of this paper is to design an effective task-aware source coder for binary hypothesis testing.
In Section~\ref{sec:system-model}, we start by formally defining the system model, where we take into account the client constraints mentioned above.
In Section~\ref{sec:compress-hyptest}, we formalize the fixed-length single-shot compression for hypothesis testing problem; we also define the optimal compressor, which requires exponential (in the alphabet size) complexity for the construction.
Then, we propose a task-oriented compression scheme in Section~\ref{sec:proposed-scheme}: our scheme is based on a greedy optimization which aims to the preserve the \emph{useful} information between the two source hypotheses, in this case the Kullback-Leibler distance between the two distributions.
The proposed compressor is constructed through iterative steps and it can be determined in polynomial time.
In Section~\ref{sec:results}, we show empirical results and computational bounds for our compressor. 
Finally, our conclusions are discussed in Section~\ref{sec:conclusion}.

\section{System Model}
\label{sec:system-model}

\begin{figure}
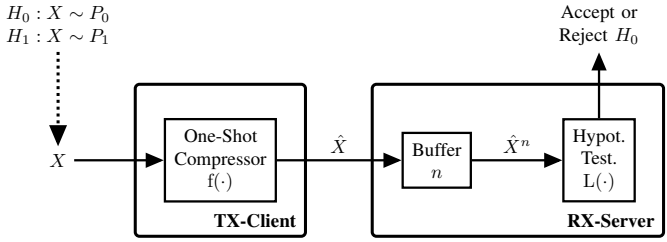

    \centering
    \includestandalone[width=\columnwidth]{fig/paper_block_diagram}
    \caption{System Model.}
    \label{fig:system-model}
\end{figure}

The system model is shown in Fig.~\ref{fig:system-model}.
Throughout the rest of the paper the client is called transmitter and the server is called receiver.
The data comes from one of the two distributions $P_\theta$, $\theta\in\{0,1\}$, where $\theta=0$ represents the null hypothesis $H_0$ and $\theta=1$ represents the alternative hypothesis $H_1$.
We have $X_1,\dots,X_n\sim P_\theta$ i.i.d. random variables defined over a finite alphabet $\Xcal=\{1,\dots,|\Xcal|\}$.
The transmitter, due to memory constraints, cannot store and process $X^n$ jointly to do hypothesis testing.
Instead, it sends the one-shot (scalar) compressed $X^n$ to the receiver where hypothesis testing takes place.

Formally, at the transmitter, the single-shot compressor $\encF$ is a surjective function defined as
\begin{equation}
    \encF:\:\Xcal\to\Mcal
\end{equation}
where $\Mcal=\{1,\dots,M\}$ is the compressed alphabet of size $M$.
We denote $\Xhat=\encF(X)$, i.e., $\Xhat$ represents the mapping of the source letter $X$.
We consider $M<|\Xcal|$, since for $M\geq |\Xcal|$ there is no need for compression.
This corresponds to fixed rate compression with rate $R=\log M$.\footnote{Throughout this paper $\log(\cdot)$ is assumed to be base 2.}


The probability distribution of $\Xhat$ under $P_\theta$, $\theta\in\{0,1\}$, is denoted as $\Phat_\theta$ and is given by
\begin{equation}
    \Phat_\theta(\xhat) = \sum_{x:\encF(x)=\xhat} P_\theta(x).
    \label{eq-compr-distr}
\end{equation}



The receiver observes $\Xhat_1,\dots,\Xhat_n$ and either accepts or rejects the null hypothesis.
Using standards definitions in simple hypothesis testing~\cite{CoverThomas-IT}, type-I error, denoted as $\alpha_n$, occurs when the null hypothesis ($\theta=0$) is true, but the receiver rejects it.
Instead, type-II error, denoted as $\beta_n$, corresponds to the receiver accepting the null hypothesis when the alternative hypothesis ($\theta=1$) is true.
It is known that in the classical hypothesis testing setting, for any $\epsilon\in(0,1/2)$ and $\alpha_n<\epsilon$, the optimal type-II error $\beta_n^\epsilon$ decays exponentially in $n$ with exponent $\gamma$ defined as
\begin{equation}
    \gamma = - \lim_{n\to\infty} \frac{1}{n} \log \beta^\epsilon_n. 
	\label{eq:chernoff-stein}
\end{equation}
We say that $(R,\eta)$ is \emph{achievable} if there exists a single-shot  rate $R$ compressor at the client and a corresponding hypothesis testing function at the server with type-I error less  than $\epsilon$ and type-II error exponent $\eta$.
Note that type-II error exponent does not typically depend on type-I error bound $\epsilon$~\cite{CoverThomas-IT} as long as $\epsilon$ is fixed, hence we will not explicitly state the dependency on $\epsilon$.
In particular, for a given compression rate $R$, we would like to find the largest achievable type-II error exponent
\begin{equation}
    \gamma^{\star} (R) = \sup\{\eta:\:(R,\eta)\text{ achievable}\}.
\end{equation}
Note that if $R=\log(|\Xcal|)$ and the compressor is the identity transformation $\funID(\cdot)$, then Chernoff-Stein lemma~\cite{CoverThomas-IT} determines the optimal error exponent
\begin{equation}
    \gamma^\star(\log|\Xcal|) = \gamma_\funID(\log|\Xcal|) = D(P_0||P_1),
\end{equation}
where $D(P_0||P_1)$ is the Kullback--Leibler (KL) divergence between $P_0$ and $P_1$~\cite{CoverThomas-IT}.
The error exponent penalty for a rate $R$ compressor $\encF$ at is defined as
\begin{equation}
    \Delta_\encF (R) =  D(P_0||P_1) - \gamma_{\encF} (R),
\end{equation}
where $\gamma_{\encF}(R)$ is the largest type-II error exponent determined by the compressor $\encF$.
The optimal penalty is
\begin{equation}
    \Delta^{\star} (R) = D(P_0||P_1) - \gamma^\star (R).
\end{equation}

\section{Hypothesis Testing under Single-Shot Compression}
\label{sec:compress-hyptest}

For the  one-shot compressed  binary  hypothesis testing problem, our first result states that the log-likelihood ratio (LLR) test using  the compressed variables $\Xhat_1,\dots,\Xhat_n$ is optimal.

\begin{lemma}[Hypothesis testing on compressed variables]
\label{lem:NP-compr}
The following LLR test on compressed variables $\Xhat_i=\encF(X_i)$, $i=1,\dots,n$, is optimal.
\begin{align}
	L(\Xhat_1,\dots,\Xhat_n) = 
	\sum_{i=1}^{n} \log \frac{\Phat_0(\Xhat_i)}{\Phat_1(\Xhat_i)} 
	\overset{{\hat\theta=0}}{\underset{{\hat\theta=1}}{\gtrless}} \log T,
	\label{eq:LLR-test}
\end{align}
where $T\geq0$ depends on the type-I error exponent bound $\epsilon$.
The corresponding optimal error exponent is 
\begin{equation}
    \gamma_\encF(R) = D(\Phat_0||\Phat_1).
\end{equation}

\end{lemma}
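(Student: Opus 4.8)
The plan is to observe that, once the compressor $\encF$ is held fixed, the receiver faces an ordinary binary hypothesis testing problem on the i.i.d.\ sequence $\Xhat_1,\dots,\Xhat_n$, so that both assertions of the lemma reduce to the classical Neyman--Pearson and Chernoff--Stein results applied over the compressed alphabet $\Mcal$.

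First I would check that, conditioned on $H_\theta$, the variables $\Xhat_i=\encF(X_i)$ are i.i.d.\ with marginal $\Phat_\theta$: this is immediate because $X_1,\dots,X_n$ are i.i.d.\ $\sim P_\theta$ and $\encF$ is a fixed deterministic map, and the pushforward of $P_\theta$ through $\encF$ is exactly $\Phat_\theta$ as defined in~\eqref{eq-compr-distr}. Hence the receiver's problem is to decide, from $\Xhat_1,\dots,\Xhat_n$, between $\Phat_0$ and $\Phat_1$. For optimality of the test I would then invoke the Neyman--Pearson lemma: among all (possibly randomized) tests based on $\Xhat_1,\dots,\Xhat_n$ whose type-I error does not exceed $\epsilon$, the type-II error is minimized by the likelihood-ratio test that decides $\hat\theta=0$ when $\prod_{i=1}^{n}\Phat_0(\Xhat_i)/\Phat_1(\Xhat_i)$ exceeds a threshold, randomizing on the boundary in a way that does not affect the exponent. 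Taking logarithms converts the product into the additive statistic $L(\Xhat_1,\dots,\Xhat_n)=\sum_{i=1}^{n}\log\frac{\Phat_0(\Xhat_i)}{\Phat_1(\Xhat_i)}$ and the product threshold into $\log T$, recovering~\eqref{eq:LLR-test}; the value of $T$ is fixed by the type-I constraint, which explains its dependence on $\epsilon$.

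Second, for the exponent, I would apply the Chernoff--Stein lemma to this same i.i.d.\ testing problem: for any fixed $\epsilon\in(0,1/2)$, the largest type-II error exponent achievable with type-I error below $\epsilon$ equals the KL divergence $D(\Phat_0\|\Phat_1)$ between the two compressed-source distributions, and it is attained by the LLR test above, so $\gamma_\encF(R)=D(\Phat_0\|\Phat_1)$.

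The argument is short, and the only real points to be careful about are conceptual rather than computational. One is the scope of ``optimal'': the claim is optimality among all receiver decision rules acting on the compressed data for a \emph{fixed} $\encF$; no optimization over the compressor is asserted here (that is the subject of the later sections). The other is the degenerate case where $\Phat_0$ is not absolutely continuous with respect to $\Phat_1$, so that $D(\Phat_0\|\Phat_1)=\infty$ and the summand $\log\frac{\Phat_0(\Xhat_i)}{\Phat_1(\Xhat_i)}$ can be $+\infty$ for some letters; under the standing assumption that $P_0$ and $P_1$ (hence $\Phat_0$ and $\Phat_1$) have common support this does not occur and the classical lemmas apply verbatim. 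I expect the main obstacle, such as it is, to be simply making the reduction precise --- in particular spelling out that the receiver has access to nothing beyond $\Xhat_1,\dots,\Xhat_n$ --- after which nothing new needs to be proved.
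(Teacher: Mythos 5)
Your proposal is correct and follows exactly the paper's own argument: the compressed samples $\Xhat_1,\dots,\Xhat_n$ are i.i.d.\ with law $\Phat_\theta$, so the Neyman--Pearson lemma gives optimality of the LLR test over $\Mcal$ and the Chernoff--Stein lemma gives $\gamma_\encF(R)=D(\Phat_0\|\Phat_1)$. Your additional remarks on the scope of optimality and the common-support caveat are sensible refinements but do not change the route.
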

\begin{proofsketch}
    Since the source random variable is i.i.d. and the compressor function is $\encF$ memoryless, the compressed variable is also i.i.d. $\Xhat_1,\dots,\Xhat_n\sim \Phat_\theta$. 
    Then, Neyman-Pearson test~\cite[Chapter 11]{CoverThomas-IT} can be applied to $\Xhat^n$.
    Moreover, Chernoff-Stein lemma determines that the the optimal error exponent is equal to the KL divergence between the distribution of the compressed variables under the two hypotheses.
\end{proofsketch}

As discussed in Section~\ref{sec:system-model}, the error exponent $\gamma_\encF(R)$ determines the speed of convergence 
--- intuitively, the farther apart the two compressed distributions (large KL divergence), the faster the type-II error probability goes to zero.
Hence, our goal is to find a compressor $\encF$ which induces a partition of $M$ sets over $\Xcal$ such that the KL distances between the compressed distributions $D(\Phat_0||\Phat_1)$ is maximized.
Clearly, compression reduces the error exponent (we will mathematically show this in  Proposition~\ref{thm:grouping-KL}) and by Lemma~\ref{lem:NP-compr} the smallest compression penalty for the compressor $\encF$ is
\begin{equation}
    \Delta_{\encF} (R) = D(P_0||P_1) - D(\Phat_0||\Phat_1).
    \label{eq:compression-penalty}
\end{equation}
Then, the optimal compressor $\encF^\star$ at rate $R=\log M$ is 
\begin{equation}
	\encF^\star = \argmax_\encF D(\Phat_0||\Phat_1) \quad \text{ s.t. }\:|\encF|\leq M, \label{eq:optimiz-prob}
\end{equation}
or, equivalently,
\begin{equation}
    \encF^\star = \argmin_\encF \Delta_\encF(R) \quad \text{ s.t. }\:|\encF|\leq M.
    \label{eq:optimal-compressor}
\end{equation}
where $|\encF|$ is the cardinality of the compression function. 

In the following proposition we derive a useful analytical expression for $\Delta_\encF(R)$ in terms of distributions over compressed symbols.
For mathematical convenience, we  define $\Gcal_{\xhat} = \{x: \encF(x)=\xhat \}$; this set includes the source outcomes mapped to the compressed symbol $\xhat$.
Hence, the compressor induces the ``groups'' $\Gcal_{\xhat}\in\{ \Gcal_1,\dots,\Gcal_M\}=\Gcal$, which form a partition over $\Xcal$.
    
\begin{proposition}[Compression penalty on type-II error exponent]
\label{thm:grouping-KL}
    For any compressor $\encF$, the minimal compression penalty is $\Delta_\encF(R)\geq 0$ and can be expressed as:
    \begin{equation}
        \Delta_{\encF}(R) = \sum_{\xhat=1}^{M} \Phat_0(\xhat)\,D\Big(P_0(x|\xhat)\Big|\Big|P_1(x|\xhat)\Big)
        \label{eq:grouping-penalty}
    \end{equation}
    where the posterior distribution of $X$ given the compressed realization $\encF(X)=\xhat$ is
    \begin{equation}
        P_{\theta}(x|\xhat) =  \frac{P_\theta(x)}{\Phat_\theta(\xhat)}\mathbbm{1}\{\xhat=\encF(x)\}.
        \label{eq:group-posterior}
    \end{equation}
\end{proposition}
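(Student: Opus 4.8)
The plan is to establish \eqref{eq:grouping-penalty} by a direct expansion of the two Kullback--Leibler divergences in $\Delta_\encF(R) = D(P_0\|P_1) - D(\Phat_0\|\Phat_1)$, using that the groups $\Gcal_1,\dots,\Gcal_M$ partition $\Xcal$, and then to read off non-negativity from the resulting expression (or, alternatively, from the log-sum inequality).

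First I would rewrite the uncompressed divergence by grouping source letters according to the compressed symbol they map to,
\[
D(P_0\|P_1) = \sum_{x\in\Xcal} P_0(x)\log\frac{P_0(x)}{P_1(x)} = \sum_{\xhat=1}^{M}\ \sum_{x\in\Gcal_{\xhat}} P_0(x)\log\frac{P_0(x)}{P_1(x)},
\]
which is legitimate since $\{\Gcal_{\xhat}\}$ is a partition. Next, using $\Phat_\theta(\xhat)=\sum_{x\in\Gcal_{\xhat}}P_\theta(x)$ from \eqref{eq-compr-distr}, I would write $D(\Phat_0\|\Phat_1)=\sum_{\xhat}\Phat_0(\xhat)\log\frac{\Phat_0(\xhat)}{\Phat_1(\xhat)}$ and distribute $\Phat_0(\xhat)=\sum_{x\in\Gcal_{\xhat}}P_0(x)$ into the sum, so that its $\xhat$-th term becomes $\sum_{x\in\Gcal_{\xhat}}P_0(x)\log\frac{\Phat_0(\xhat)}{\Phat_1(\xhat)}$. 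Subtracting term by term and combining the logarithms gives, for each $\xhat$,
\[
\sum_{x\in\Gcal_{\xhat}} P_0(x)\log\frac{P_0(x)/\Phat_0(\xhat)}{P_1(x)/\Phat_1(\xhat)} = \Phat_0(\xhat)\sum_{x\in\Gcal_{\xhat}}\frac{P_0(x)}{\Phat_0(\xhat)}\log\frac{P_0(x)/\Phat_0(\xhat)}{P_1(x)/\Phat_1(\xhat)},
\]
and by the definition \eqref{eq:group-posterior} the inner sum is exactly $D\big(P_0(x|\xhat)\big\|P_1(x|\xhat)\big)$; summing over $\xhat$ yields \eqref{eq:grouping-penalty}. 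For the non-negativity claim $\Delta_\encF(R)\ge 0$, note that the identity just derived exhibits $\Delta_\encF(R)$ as a convex combination (weights $\Phat_0(\xhat)$ summing to one) of relative entropies, each non-negative by Gibbs' inequality; equivalently, one can apply the log-sum inequality to $\{P_0(x),P_1(x)\}_{x\in\Gcal_{\xhat}}$ to get $\sum_{x\in\Gcal_{\xhat}}P_0(x)\log\frac{P_0(x)}{P_1(x)}\ge\Phat_0(\xhat)\log\frac{\Phat_0(\xhat)}{\Phat_1(\xhat)}$ and sum over $\xhat$.

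A cleaner, coordinate-free alternative is the chain rule for relative entropy applied to the joint law of $(X,\Xhat)$ under each hypothesis: since $\Xhat=\encF(X)$ is deterministic, the conditional term $D\big(P_0(\xhat|x)\|P_1(\xhat|x)\big)$ vanishes, so $D\big(P_0(x,\xhat)\|P_1(x,\xhat)\big) = D(P_0\|P_1)$, whereas expanding in the other order gives $D(\Phat_0\|\Phat_1) + \sum_{\xhat}\Phat_0(\xhat)\,D\big(P_0(x|\xhat)\|P_1(x|\xhat)\big)$; equating the two expressions is exactly the assertion. There is no real obstacle here: the only care needed is notational and measure-theoretic, namely fixing the standard conventions $0\log\frac{0}{b}=0$ and $a\log\frac{a}{0}=+\infty$ and implicitly assuming $D(P_0\|P_1)<\infty$, i.e. $\mathrm{supp}(P_0)\subseteq\mathrm{supp}(P_1)$ (which forces $\mathrm{supp}(\Phat_0)\subseteq\mathrm{supp}(\Phat_1)$ and keeps every conditional divergence finite), so that the term-by-term rearrangement of the sums is valid; empty groups are excluded because $\encF$ is surjective.
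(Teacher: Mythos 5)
Your proof is correct and follows essentially the same route as the paper: expand $\Delta_\encF(R)=D(P_0\|P_1)-D(\Phat_0\|\Phat_1)$ over the partition $\{\Gcal_{\xhat}\}$, distribute $\Phat_0(\xhat)$ into the sum, combine the logarithms to recognize the posteriors of \eqref{eq:group-posterior}, and get non-negativity from the log-sum inequality (equivalently, as a convex combination of non-negative KL terms). Your chain-rule alternative is a valid, slightly more abstract packaging of the same identity, but it does not change the substance of the argument.
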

\begin{proof}

    Expanding equation~\eqref{eq:compression-penalty}:
    {\small
    \begin{align}
		&\Delta_{\encF}(R) = \sum_{x\in\Xcal} P_0(x)\log\frac{P_0(x)}{P_1(x)} - \sum_{\xhat\in\Mcal} \Phat_0(\xhat)\log\frac{\Phat_0(\xhat)}{\Phat_1(\xhat)}\nonumber\\
		&=\sum_{\xhat\in\Mcal} \sum_{x\in\Gcal_{\xhat}} P_0(x)\log\frac{P_0(x)}{P_1(x)} - \sum_{\xhat\in\Mcal} \left(\sum_{x\in\Gcal_{\xhat}}P_0(x)\right)\log\frac{\Phat_0(\xhat)}{\Phat_1(\xhat)}
		\label{eq:proof-KL-group-positive}\\
		& = \sum_{\xhat\in\Mcal} \sum_{x\in\Gcal_{\xhat}}P_0(x) \log \left(\frac{P_0(x)}{\Phat_0(\xhat)}\,\frac{\Phat_1(\xhat)}{P_1(x)} \right) \label{eq:proof-KL-2}\\
		&= \sum_{\xhat\in\Mcal} \sum_{x\in\Gcal_{\xhat}}P_0(x) \log \frac{P_0(x|\xhat)}{P_1(x|\xhat)}  \label{eq:proof-KL-3}\\
		& = \sum_{\xhat\in\Mcal} \Phat_0(\xhat)\,D\Big(P_0(x|\xhat)\Big|\Big|P_1(x|\xhat)\Big)\nonumber
	\end{align}
	}
    where: in~\eqref{eq:proof-KL-group-positive} we used the definition~\eqref{eq-compr-distr}; in~\eqref{eq:proof-KL-group-positive} and~\eqref{eq:proof-KL-2} we used the fact that $\Gcal_1,\dots,\Gcal_M$ form a partition over $\Xcal$; in~\eqref{eq:proof-KL-3} we used the definition~\eqref{eq:group-posterior} since $P(\Xhat|X) =  \mathbbm{1}\{\Xhat=\encF(X)\}$.
	Note that if $\Gcal_{\xhat}$ contains a single element (one-to-one mapping), then $\,D\big(P_0(x|\xhat)||P_1(x|\xhat)\big)=0$.
	Moreover~\eqref{eq:proof-KL-group-positive} is greater than zero by the log-sum inequality.
\end{proof}

Non-negativity of $\Delta_\encF(R) \geq 0$ can also be observed from equation~\eqref{eq:grouping-penalty} as it is a convex combination of KL-distances, each individually positive. 
Proposition~\ref{thm:grouping-KL} also yields an important intuition about optimal compression:
note that the $\xhat$-th term in~\eqref{eq:grouping-penalty} is directly proportional to the relative entropy between the posteriors over the $\xhat$-th group $\Gcal_{\xhat}$ induced by $\encF$.
As a consequence, \eqref{eq:grouping-penalty} suggests that a good task-aware compression strategy combines the source letters that have similar posteriors over the compressed groups; in other words, the probability ratios between the combined letters under $P_0$ has to be similar to the ones under $P_1$. 


\section{Proposed Compressor}
\label{sec:proposed-scheme}

\begin{algorithm}[t]
	\caption{KL-greedy compressor's construction}
	\label{alg:compres-hyp-test}
	\SetAlgoLined
	\SetKwInOut{Input}{Input}
	\SetKwInOut{Output}{Output}
	\SetKwProg{Init}{Init: }{}{}
	\SetKwRepeat{Do}{do}{while}
	
	\Input{Source distributions $P_0, P_1$; rate $M$.} 
	Initialize: $\Phat_0\gets P_0$, $\Phat_1\gets P_1$, $\Gcal\gets\{\{1\},\dots,\{|\Xcal|\}\}$.
	

\For{$k=1,\dots,|\Xcal|-M$}{
	Find $\{\Gcal_a,\Gcal_b\}\subset\Mcal_k$ which minimize~\eqref{eq:grouping-one-step}.\\
	Remove the $b$-th entry and combine $\{\Gcal_a,\Gcal_b\}$ by updating the $a$-th entry:\\
	$\Phat_0\gets [\dots,\Phat_0(\Gcal_a)+\Phat_0(\Gcal_b),\dots,0,\dots]$\\
	$\Phat_1\gets [\dots,\Phat_1(\Gcal_a)+\Phat_1(\Gcal_b),\dots,0,\dots]$\\
	$\mathcal G\gets [\dots,\Gcal_a \cup \Gcal_b,\dots,\emptyset,\dots]$
}

	\Output{Compressed distr. $\Phat_0,\Phat_1$; groups $\mathcal{G}$.}
\end{algorithm}

When solving the optimization problem in~\eqref{eq:optimiz-prob}, one has to consider all the possible surjective functions $\encF$ which induce valid partitions over the source alphabet; the number of such number of partitions is exponential in the source/compressed alphabet size.  
Partitioning problems of this nature have been shown to be NP-Hard~\cite[Chapter 3]{book-theory_NP},\cite{Kai-15}.

In this paper, we propose an efficient (i.e., polynomial time) greedy approximation for the optimal compressor.
The following lemma is the basis for our construction.
\begin{lemma}[One-step Compression from $|\Xcal|$ to $|\Xcal|-1$]
	\label{cor:one-step}
	Let $\encF$ be a compression rule which groups two letters $\{a,b\}\subset\Xcal$.
	That is, $\Gcal_{m}=\{a,b\}$, $m\in\Mcal$, and the others groups $\Gcal_i$, $i=1,\dots,M$, $i\neq m$, are one-to-one.
	Then, the optimal compressor for $M=|\Xcal|-1$ induces the groups $\Gcal^\star$, minimizing the compression penalty
	\begin{align}
        \Gcal^\star = &\argmin_{\Gcal_m=\{a,b\}\subset\Xcal} \bigg\{\Phat_0(m) D\Big(P_0(x|m)\Big|\Big|P_1(x|m)\Big) \bigg\},
		\label{eq:grouping-one-step}
	\end{align}
	where the posteriors over the candidate group $\Gcal_m=\{a,b\}$ are simply defined as
    \begin{equation}
        P_\theta(x|m) = \left[\frac{P_\theta(a)}{P_\theta(a)+P_\theta(b)}, \frac{P_\theta(b)}{P_\theta(a)+P_\theta(b)} \right].
    \end{equation}
\end{lemma}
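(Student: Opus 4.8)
The plan is to reduce the lemma to Proposition~\ref{thm:grouping-KL} by first identifying the combinatorial shape of any admissible compressor at rate $R=\log(|\Xcal|-1)$, and then specializing the penalty formula~\eqref{eq:grouping-penalty} to that shape.

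\textbf{Step 1 (structure of the partition).} A rate-$\log(|\Xcal|-1)$ compressor is, by definition, a surjection $\encF:\Xcal\to\Mcal$ with $|\Mcal|=|\Xcal|-1$, so it induces a partition $\Gcal=\{\Gcal_1,\dots,\Gcal_{|\Xcal|-1}\}$ of $\Xcal$ into $|\Xcal|-1$ nonempty blocks. Since $\sum_{\xhat}|\Gcal_{\xhat}|=|\Xcal|$ and each block has size at least $1$, exactly one block has size $2$ and all the others are singletons. Hence, up to relabeling, every such $\encF$ has precisely the form assumed in the lemma: a single merged pair $\Gcal_m=\{a,b\}$ with all remaining groups one-to-one; the only freedom left is the choice of the pair $\{a,b\}\subset\Xcal$. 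I would also remark that, if one instead allows $|\encF|\le|\Xcal|-1$, the optimum is still attained at exactly $|\Xcal|-1$ symbols: refining a partition can only decrease the penalty, because the extra penalty incurred passing from the finer to the coarser partition is again an expression of the form~\eqref{eq:grouping-penalty} and hence nonnegative — the same data-processing-type identity underlying Proposition~\ref{thm:grouping-KL}.

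\textbf{Step 2 (specialize and optimize).} Apply Proposition~\ref{thm:grouping-KL} to such an $\encF$. In~\eqref{eq:grouping-penalty}, every singleton group $\Gcal_{\xhat}$ contributes a term in which both posteriors $P_0(x|\xhat)$ and $P_1(x|\xhat)$ from~\eqref{eq:group-posterior} are the point mass at the unique element of $\Gcal_{\xhat}$, so $D\big(P_0(x|\xhat)\,\|\,P_1(x|\xhat)\big)=0$, as already noted in the proof of Proposition~\ref{thm:grouping-KL}. Only the index $m$ survives, giving
\[
\Delta_\encF(R)=\Phat_0(m)\,D\big(P_0(x|m)\,\big\|\,P_1(x|m)\big),
\]
where $\Phat_\theta(m)=P_\theta(a)+P_\theta(b)$ and the two-point posterior $P_\theta(x|m)=\big[\frac{P_\theta(a)}{P_\theta(a)+P_\theta(b)},\,\frac{P_\theta(b)}{P_\theta(a)+P_\theta(b)}\big]$ follows directly from~\eqref{eq:group-posterior}. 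By~\eqref{eq:optimal-compressor}, the optimal compressor minimizes $\Delta_\encF(R)$; combining with Step~1, this minimization is exactly over the choice of the merged pair, which is precisely~\eqref{eq:grouping-one-step}. Hence the induced groups $\Gcal^\star$ are optimal.

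\textbf{Main obstacle.} The calculations are routine, so the lemma is essentially a direct corollary of Proposition~\ref{thm:grouping-KL}; the only point that needs care is the reduction in Step~1 — that it suffices to consider compressors using the full budget of $|\Xcal|-1$ symbols, and that within that class every compressor merges exactly one pair. Both are elementary (a counting argument together with non-negativity of~\eqref{eq:grouping-penalty}), so I do not anticipate any genuine difficulty.
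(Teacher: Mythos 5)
Your proposal is correct and follows essentially the same route the paper intends: the lemma is treated as a direct specialization of Proposition~\ref{thm:grouping-KL}, where all singleton groups contribute zero KL terms (as the paper notes right after the lemma), leaving only the merged-pair term $\Phat_0(m)\,D\big(P_0(x|m)\,\|\,P_1(x|m)\big)$ to minimize over the choice of $\{a,b\}$. Your Step~1 counting argument and the remark that using fewer than $|\Xcal|-1$ symbols cannot help are harmless additions that merely make explicit what the paper leaves implicit.
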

Note that if the groups $\Gcal_i$ are one-to-one, the $i$-th KL divergence term in~\eqref{eq:grouping-penalty} is 0.
Intuitively, when reducing the alphabet size by one, the optimal compressor combines the two letters that minimize the product of the probability of the group and the  KL distance between the posteriors over the group. 

For general $M$, we propose an iterative construction of the compressor that reduces the compressed alphabet size by one in each step.
Denote the steps by $k=1,\dots,|\Xcal|-M$, where $M$ is the target rate.
Let $\Mcal_k$ be the compressed alphabet at the $k$-th step, with size $|\Mcal_k| = |\Xcal|-k$, with $k=1,\dots,|\Xcal|-M$.
Let $\Gcal_1,\dots,\Gcal_{|\Mcal_k|}$ be the corresponding partition on $\Xcal$ at the $k$-th step.
For example, at the first step $k=1$, the (optimal) groups $\Gcal_1,\dots,\Gcal_{|\Xcal|-1}$ are computed according to Lemma~\ref{cor:one-step}.
Generally, at step $k>1$, our compressor combines the two groups $\{\Gcal_a,\Gcal_b\}_k^\star\subset\Mcal_k$ that minimize~\eqref{eq:grouping-one-step}, where $\Xcal$ is replaced by $\Mcal_k$ and $\{\Gcal_a,\Gcal_b\}$ is a generalization of $\{a,b\}$.
Finally, the compression function $\encF$ is defined such that $\encF(x) = \xhat$ if $x\in\Gcal_{\xhat}$.
We call our proposed compressor ``KL-greedy'' and its construction is summarized in Algorithm~\ref{alg:compres-hyp-test}.
Note that the number of pairs of groups $\{\Gcal_a,\Gcal_b\}$ that need to be considered at the $k$-th step is $\binom{|\Mcal_k|}{2}$.
Thus, our compressor can be designed in polynomial time.

\section{Results}
\label{sec:results}

\begin{figure}[t]
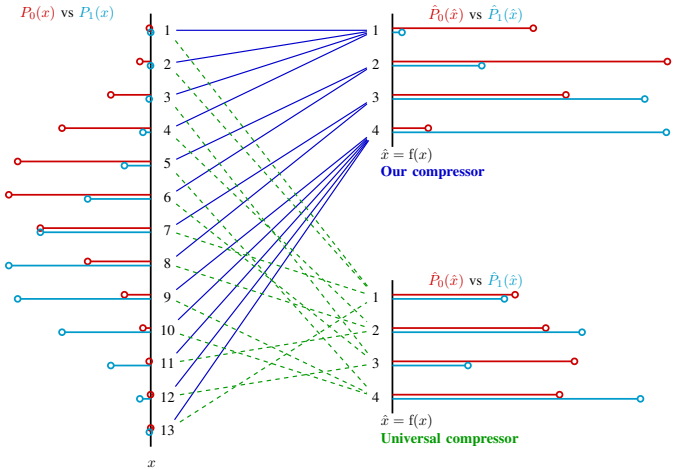

    \centering
    \includestandalone[width=\columnwidth]{fig/distributions_fig}
    \caption{
    Left: Source distributions for $|\Xcal|=13$. 
    Top-right: compressed distributions for our compressor of Algorithm~\ref{alg:compres-hyp-test}; the solid blue line shows the mappings of the compression function.
    Bottom-right: compressed distributions for the universal compressor from~\cite{Shkel-17-ISIT}; the dashed green line shows the mappings of the compression function.}
    \label{fig:distributions}
\end{figure}

In this section, we discuss numerical results and performance of Algorithm~\ref{alg:compres-hyp-test}.
We consider $P_\theta$ to be a (shifted) binomial distribution over $\Xcal$ with parameter $s_\theta$, i.e., 
\begin{equation}
    P_\theta(x) = \binom{|\Xcal|-1}{x-1} s_\theta^{x-1} (1-s_\theta)^{|\Xcal|-x}.
\end{equation}

We quantify the compression penalty $\Delta_\encF(R)$ based on~\eqref{eq:compression-penalty}.  
We also estimate type-II error rate by performing the LLR test~\eqref{eq:LLR-test} on the receiver side; we consider blocklength $n=5$ and bound on the type-I error $\epsilon=0.05$.
The threshold $T$ is empirically chosen such that it is the largest value for which the estimated type-I error is $N(\hat\theta=1,\theta=0)/N(\theta=0)<\epsilon$, for a given compressor $\encF$ at rate $M$; $N(\cdot)$ is the counting function.
The type-II error rate is empirically estimated as $N(\hat\theta=0,\theta=1)/N(\theta=1)$. 
Both estimates are computed over $N(\theta=0)=N(\theta=1)=10^6$ realizations of source blocks $x^n$.

\subsection{Baseline: Single-shot Universal Lossy Source Coding under Logarithmic Loss}
\label{sec:background-univ-compression}

Universal compression schemes are designed to perform well over a family of source distributions --- the family $\{P_0,P_1\}$ in our scenario.
In compliance with our system model, we consider the universal fixed-length single-shot lossy compression scheme analyzed by Shkel et al. in~\cite{Shkel-17-ISIT}.
We recall that although this universal compressor is task-unaware, it is designed for soft reconstruction under logarithmic loss distortion, which generally provides ``universally good'' schemes~\cite{No-2019-logloss}.
The construction of this universal compressor aims to find $Q^\star$, a distribution over $\Xcal$ which is used to approximate the source distribution over the family $\{P_0,P_1\}$.
As in~\cite{Shkel-17-ISIT}, for a rate constraint $R=\log M$, $Q^\star$ belongs to 
\begin{equation}
    \mathcal{Q}_M = \{Q: \min_{x\in\Xcal} \log\frac{1}{Q(x)} \geq \log M\}.
\end{equation}
For every value of $M$, $Q^\star$ is the solution of the following optimization problem
\begin{align}
    Q^\star = \argmin_{Q\in\mathcal Q_M} \delta \quad
    \text{ s.t.: } \begin{cases}
     D(P_0 || Q) \leq \delta,\\
     D(P_1 || Q) \leq \delta.
     \end{cases}
\end{align}
In other words, $Q^\star$ can be seen as a distribution that is ``equidistant'' from the two hypotheses.
Given $Q^\star$, the universal compressor is constructed according to~\cite[Theorem 4]{Shkel-18-TIT}.
Intuitively, the letters corresponding to the largest values of $Q^\star$ get one-to-one mappings, while the letters corresponding to the lowest values of $Q^\star$ get grouped together.

\subsection{Simulation Results}

We show the performance of different compressors in our hypothesis testing scenario. 
In the figures, we show empirical results for different compression functions $\encF$:
\begin{itemize}
    \item Uncompressed: no compression is performed, i.e.,  $\xhat=x$;
    \item Optimal compressor: defined in~\eqref{eq:optimal-compressor};
    \item Our KL-greedy compressor: defined in Section~\ref{sec:proposed-scheme} and Algorithm~\ref{alg:compres-hyp-test};
    \item Universal compressor: defined in~\cite{Shkel-17-ISIT} and briefly introduced in Section~\ref{sec:background-univ-compression}.
\end{itemize}

\begin{figure}[t]
    	\centering
        \includestandalone[width=\columnwidth]{fig/delta_Xshort_fig}
    	\caption{
    	Compression penalty for $|\Xcal|=13$.
    	}
    	\label{fig:KL-results-short}
\end{figure}

\begin{figure}[t]
        \centering
        \includestandalone[width=\columnwidth]{fig/errRate_Xshort_fig}
        \caption{
        Type-II error rates for $|\Xcal|=13$.
        }
        \label{fig:errRate-short}
\end{figure}

\begin{figure}[t]
    \centering
    \includestandalone[width=\columnwidth]{fig/delta_fig}
    \caption{
    Compression penalty for $|\Xcal|=256$.
    }
    \label{fig:KL-results}
\end{figure}

\begin{figure}[t]
	\centering
    \includestandalone[width=\columnwidth]{fig/errRate_fig}
	\caption{
	Type-II error rates for $|\Xcal|=256$.
	}
	\label{fig:errRate}
\end{figure}

In Fig.~\ref{fig:distributions}, \ref{fig:KL-results-short} and~\ref{fig:errRate-short} we consider a source alphabet of size $|\Xcal|=13$; the parameters of the two hypotheses are $s_0=0.4$, $s_1=0.6$.
On the other hand, in Fig.~\ref{fig:KL-results} and~\ref{fig:errRate} we consider a larger source alphabet of size $|\Xcal|=256$;
the parameters of the two hypotheses are $s_0=0.48$, $s_1=0.52$.
We note that for this larger source alphabet, it is no longer computationally feasible to determine the optimal compressor.

Fig.~\ref{fig:distributions} illustrates the resulting KL-greedy compressor, the universal compressor, and the compressed distributions for $M=4$. 
As discussed in Section~\ref{sec:compress-hyptest}, our KL-greedy compressor seeks to minimize the KL distance between the posteriors over the groups; we also point out that this induces a partition on $\Xcal$ that divides the source alphabet in regions where one of the hypothesis is more likely than the other.
This pattern is also visible in the compressed distributions, since the two hypotheses exhibit divergent distributions (large KL distance). 
On the other hand, the universal compressor aims to make the two compressed distributions as uniform as possible.
Clearly, as we discussed in Section~\ref{sec:compress-hyptest}, the larger KL divergence between the compressed distributions, the better for the hypothesis testing task.

Fig.~\ref{fig:KL-results-short} and~\ref{fig:KL-results} show the compression penalty as a function of the compression rate $M$.
The former also shows the performance of the optimal compressor, since it can be computed in reasonable time for a small source alphabet; in this case, we can see that our compressor performs close to the optimal.
In both cases, our compressor outperforms the universal compressor, and it quickly achieves zero penalty, i.e., the KL distance of the compressed distributions is close to the uncompressed one as $M$ increases.

Fig.~\ref{fig:errRate-short} and~\ref{fig:errRate} show the empirical type-II error rate as a function of the compression rate $M$.
The former also shows the performance of the optimal compressor: our compressor performance overlaps with the optimal compressor.
For both the small and the large alphabet scenarios, our compressor outperforms the universal compressor, and it quickly achieves an error rate close to the uncompressed setting as $M$ increases.

\section{Conclusion}
\label{sec:conclusion}

In this paper, we have analyzed one-shot lossy source coding for task-oriented communications.
We have provided a problem formulation where the transmitter has to compress data coming from one of two distribution, and the goal is to carry out hypothesis testing at the receiver side. 
We have proposed a greedy compression function which can be determined in polynomial time and aims to preserve the \emph{useful} information for hypothesis testing at the receiver. 
Namely, our scheme is designed to minimize the gap between the KL divergences at the source and after compression.
Our experimental results show that our compressor outperforms classical universal compression schemes and achieves error rate comparable to the uncompressed case even for low rates.

\bibliographystyle{bibliography/IEEEtran}
\bibliography{bibliography/IEEEabrv,bibliography/references}

\end{document}